\def\spisok#1{\begin{gather}#1\end{gather}}
\newcounter{trans}\renewcommand*\thetrans{T\arabic{trans}}
\newcounter{eqlist}\renewcommand*\theeqlist{E\arabic{eqlist}}
\begin{document}
\allowdisplaybreaks

\newcommand{\arXivNumber}{1903.11893}

\renewcommand{\PaperNumber}{062}

\FirstPageHeading

\ShortArticleName{Integrable Modifications of the Ito--Narita--Bogoyavlensky Equation}
\ArticleName{Integrable Modifications\\ of the Ito--Narita--Bogoyavlensky Equation}

\Author{Rustem N.~GARIFULLIN and Ravil I.~YAMILOV}

\AuthorNameForHeading{R.N.~Garifullin and R.I.~Yamilov}
\Address{Institute of Mathematics, Ufa Federal Research Centre, Russian Academy of Sciences,\\ 112 Chernyshevsky Street, Ufa 450008, Russia}
\Email{\href{mailto:rustem@matem.anrb.ru}{rustem@matem.anrb.ru}, \href{mailto:RvlYamilov@matem.anrb.ru}{RvlYamilov@matem.anrb.ru}}

\ArticleDates{Received April 01, 2019, in final form August 14, 2019; Published online August 23, 2019}

\Abstract{We consider five-point differential-difference equations. Our aim is to find integrable modifications of the Ito--Narita--Bogoyavlensky equation related to it by non-invertible discrete transformations. We enumerate all modifications associated to transformations of the first, second and third orders. As far as we know, such a~classification problem is solved for the first time in the discrete case. We analyze transformations obtained to specify their nature. A number of new integrable five-point equations and new transformations have been found. Moreover, we have derived one new completely discrete equation. There are a~few non-standard transformations which are of the Miura type or are linearizable in a~non-standard way. We have also proved that the orders of possible transformations are restricted by the number five in this problem.}

\Keywords{Miura transformation; integrable differential-difference equation; Ito--Narita--Bogoyavlensky equation}

\Classification{37K05; 37K10; 35G20}

\section{Introduction}
We consider differential-difference equations of the form
\begin{gather}\dot v_n=f(v_{n+2},v_{n+1},v_n,v_{n-1},v_{n-2}),\label{eqvgen}\end{gather}
where $v_n=v_n(t)$ is an unknown function of the continuous time $t$ and discrete integer variable~$n$,~$\dot v_n$ denotes the time derivative of $v_n$, and $f$ is a function of five variables.
The oldest and the most famous integrable example of this class is the Ito--Narita--Bogoyavlensky (INB) equation~\cite{bo88, i75,na82}
\begin{gather}\dot u_n=u_n(u_{n+2}+u_{n+1}-u_{n-1}-u_{n-2})\label{INB}.\end{gather}

Lately, equations of this class have been intensively studied by the generalized symmetry method, see, e.g., \cite{a14,a16, GYL17,GYL18}. However, the problem of description of all integrable equations of this class is far from complete. To find new integrable equations, we apply in this article an alternative approach using non-invertible discrete transformations.

More precisely, we use non-invertible transformations of the following special form
\begin{gather}\label{tran_g}u_n=g(v_{n+k_1},v_{n+k_1-1},\ldots,v_{n+k_2+1},v_{n+k_2}),\qquad k_1> k_2,\end{gather} relating \eqref{eqvgen} and \eqref{INB}. They transform any solution $v_n$ of \eqref{eqvgen} into a solution $u_n$ of~\eqref{INB}. Such transformation is explicit in one direction and can be called the discrete substitution by analogy with differential substitutions in the continuous case, see, e.g., \cite{s88, s01}, where such transformations were studied. Numerous examples of transformations of the form~\eqref{tran_g} can be found in \cite{y06} for the Volterra and Toda type equations and in \cite{GYL17,GYL18} for the five-point equations~\eqref{eqvgen}. Different methods for the construction of discrete transformations \eqref{tran_g} are presented in~\cite{GYL16, y94}.

Equation \eqref{INB} is one of the key equations of lists of integrable equations found in \cite{GYL17,GYL18} as a result of the generalized symmetry classification of an important subclass of \eqref{eqvgen}. Here we are going to enumerate all modifications of the INB equation \eqref{INB}, which are associated with transformations \eqref{tran_g} of the orders $k=1,2,3$, where $k=k_1-k_2$. We will also prove that the order of a possible transformation in this problem is restricted by the number five: $k\leq 5$. This estimate is accurate in the sense that there exist transformations for all orders $1\leq k\leq 5$. Examples of transformations of any such an order will be given below. As far as we know, a~classification problem of this kind is solved for the first time in the discrete case, as for continuous case, see, e.g.,~\cite{kpz12}.

A well-known example of a transformation of the form \eqref{tran_g} is the discrete Miura transformation~\cite{w76}
\begin{gather} \label{miura}u_n=(v_{n+1}+1)(v_n-1)\end{gather} relating the Volterra equation and its modification.
There is a class of more simple linearizable transformations in the terminology of~\cite{GYL16}. Such transformations are more simple than Miura type ones in the sense that the problem of finding $v_n$ by~\eqref{tran_g}, starting from a given function~$u_n$, is more easy. We analyze transformations obtained in this paper to show that the most of them are linearizable.

As a result of the classification we obtain a number of new integrable equations and transformations. Most of new equations belong to the class~\eqref{eqvgen} and one of them is the discrete quad-equation. Among new transformations found here, one is of Miura type and two are linearizable in a non-standard way.

In Section \ref{theo} we discuss some theoretical aspects and prove a boundedness theorem for the orders of possible transformations. In Section~\ref{low} transformations of the orders~1 and~2 are classified together with corresponding modifications. Section~\ref{third} is devoted to transformations of the order~3. In Section~\ref{secN} we discuss in detail the most interesting examples of equations and transformations obtained in the previous section. In conclusion we briefly summarize results obtained in the paper.

\section{Theoretical comments and results}\label{theo}

The differential-difference equations \eqref{eqvgen} and discrete transformations~\eqref{tran_g} we consider in this paper are autonomous, i.e., do not explicitly depend on the discrete variable $n$. That is why, for brevity, we may write down equations and transformations at the point $n=0$:
\begin{gather}\dot v=f(v_2,v_1,v,v_{-1},v_{-2}), \label{eqv}\end{gather} where $v=v_0.$
Moreover, up to the shift of the discrete variable $n$, transformations~\eqref{tran_g} can be rewritten in the form
\begin{gather} u=g(v_k,v_{k-1},\ldots,v_1,v),\qquad k\geq 1,\label{tran}\end{gather}where $u=u_0$. We also use a natural restriction
\begin{gather}\frac{\partial g}{\partial v_k}\neq 0,\qquad \frac{\partial g}{\partial v}\neq 0.\label{usl-g}
\end{gather}

We are going to enumerate all modifications of the form \eqref{eqv} of the INB equation \eqref{INB}, which correspond to transformations \eqref{tran} of the orders $k = 1,2,3$. Such modifications are integrable equations in the sense that they possess infinitely many conservation laws.

In fact, the INB equation \eqref{INB} has conservation laws of an arbitrarily high order, which can easily be constructed by using the well-known Lax representation \cite{bo88} or recursive operator~\cite{ztof91}. Modifications~\eqref{eqv} are related to the INB equation by transformations of the form \eqref{tran} and, for this reason, also have conservation laws of an arbitrarily high order due to~\cite[Theorem~18]{y06}. That theorem is formulated for the Volterra type equations, but it can easily be reformulated for the case of five-point equations \eqref{eqv}. Conservation laws for a modification \eqref{eqv} are constructed in an explicit way by using the corresponding transformation \eqref{tran} and known conservation laws of \eqref{INB} \cite[Section~2.7]{y06}.

Let us introduce the notation $h_j$ for any function $h=h(v_{m_1},v_{m_1-1},\ldots,v_{m_2})$:
$h_j=T^j h,$ where $T$ is the shift operator defined by \begin{gather*}T^j h(v_{m_1},v_{m_1-1},\ldots,v_{m_2})=h(v_{m_1+j},v_{m_1-1+j},\ldots,v_{m_2+j})\end{gather*} for any integer $j$, in particular, $h_0=h$.

If equation \eqref{eqv} is transformed into \eqref{INB} by transformation \eqref{tran}, then the functions $f$,~$g$ must satisfy the determining equation
\begin{gather} D_t g=g(g_2+g_1-g_{-1}-g_{-2}), \qquad D_t g=\sum_{j=0}^k \frac{\partial g}{\partial v_j}f_j\label{usl_z},\end{gather}
i.e., here $D_t $ is the operator of differentiation in virtue of \eqref{eqv}.

Our main problem can formulated as follows: for any fixed $k\geq1$ we look for pairs of func\-tions~$f$,~$g$ satisfying~\eqref{usl_z}. It is important that the functions $v_j,\ j\in\mathbb{Z},$ are considered in this problem as the independent variables. The functions $f$,~$g$ must identically satisfy \eqref{usl_z} for all values of these variables.

\looseness=-1 Below we prove that the orders of possible transformations in this problem are restricted by the number five: $k\leq 5$. This estimate is accurate in the sense that there exist transformations for all the orders $1\leq k\leq 5$. Examples of transformations of the orders $k=1,2,3$ will be given in Sections~\ref{low} and~\ref{third}, while examples of the orders $k=4,5$ are presented in both Sections~\ref{secN1} and~\ref{secN2}.

\begin{theorem}\label{order} An equation of the form \eqref{eqv} cannot be transformed into the INB equation \eqref{INB} by a transformation of the form \eqref{tran} for any order $k>5$.
\end{theorem}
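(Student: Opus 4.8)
The plan is to analyze the determining equation \eqref{usl_z} by tracking the extremal shifts of the variables $v_j$ on both sides. Suppose \eqref{eqv} is transformed into \eqref{INB} by a transformation \eqref{tran} of order $k$, with $g$ depending genuinely on $v_k$ and $v$ as guaranteed by \eqref{usl-g}. The left-hand side is $D_t g=\sum_{j=0}^k \frac{\partial g}{\partial v_j}f_j$, where $f_j$ involves $v_{j+2},\ldots,v_{j-2}$; since $\frac{\partial g}{\partial v_k}\neq 0$, the term $\frac{\partial g}{\partial v_k}f_k$ contributes a dependence on $v_{k+2}$, so the left-hand side reaches up to the shift $k+2$ and down to $-2$. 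The right-hand side $g(g_2+g_1-g_{-1}-g_{-2})$ is built from $g_2$, which depends on $v_{k+2},\ldots,v_2$, and $g_{-2}$, which depends on $v_{k-2},\ldots,v_{-2}$; so the right-hand side reaches up to $v_{k+2}$ and down to $v_{-2}$ as well. The equation is an identity in all the independent variables $v_j$, $j\in\mathbb{Z}$, so matching the behaviour at the extreme shifts gives constraints.

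First I would differentiate \eqref{usl_z} with respect to the top variable $v_{k+2}$. On the right-hand side only $g_2$ carries $v_{k+2}$, so I obtain $g\cdot\frac{\partial g_2}{\partial v_{k+2}}$, i.e.\ $g\cdot T^2\!\big(\frac{\partial g}{\partial v_k}\big)$. On the left-hand side only $f_k$ carries $v_{k+2}$, giving $\frac{\partial g}{\partial v_k}\cdot\frac{\partial f_k}{\partial v_{k+2}}=\frac{\partial g}{\partial v_k}\cdot T^k\!\big(\frac{\partial f}{\partial v_2}\big)$. This produces a first relation coupling $g$ and $f$ at the top. Symmetrically, differentiating with respect to the bottom variable $v_{-2}$ isolates $g_{-2}$ on the right and $f_0=f$ on the left, yielding a companion relation at the bottom. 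The key point is that these two relations pin down how $\frac{\partial f}{\partial v_2}$ and $\frac{\partial f}{\partial v_{-2}}$ depend on $g$ through its extremal partial derivatives.

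The heart of the argument is then to iterate this boundary analysis inward. I would peel off successive shifts — differentiating \eqref{usl_z} with respect to $v_{k+1}$, then $v_{-1}$, and comparing with the already-extracted top and bottom data — to propagate functional constraints on the partial derivatives $\frac{\partial g}{\partial v_j}$ across the range $0\le j\le k$. Because the window of $f$ has fixed width five while the window of $g$ has width $k+1$, the mismatch in how the five-point equation \eqref{INB} redistributes variables forces a rigidity: the coupling relations can be satisfied consistently only if $k$ does not exceed a fixed bound. Carrying the bookkeeping to its conclusion should show that for $k>5$ the overlapping constraints become incompatible, which is exactly the claim.

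The main obstacle will be the middle range of shifts, away from the two boundaries. At the extremes the analysis is clean because only one term on each side survives, but for intermediate $j$ several of the summands $\frac{\partial g}{\partial v_j}f_j$ overlap the same variable, and on the right-hand side the four shifted copies $g_2,g_1,g_{-1},g_{-2}$ interfere. Organizing these overlaps — presumably by a careful induction on the shift index, or by repeatedly applying $\frac{\partial}{\partial v_j}$ and using the nonvanishing conditions \eqref{usl-g} to divide through — is where the real work lies. I expect the bound $k\le 5$ to emerge precisely as the point at which the top window of $f_k$ (reaching $v_{k+2}$) and the bottom window of $f$ (reaching $v_{-2}$) can no longer both be absorbed by the width-five structure of the INB right-hand side.
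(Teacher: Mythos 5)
Your opening move coincides with the paper's: differentiating the determining equation \eqref{usl_z} with respect to the extremal variable $v_{k+2}$ isolates $g\,\frac{\partial g_2}{\partial v_{k+2}}$ on one side and $\frac{\partial g}{\partial v_k}\frac{\partial f_k}{\partial v_{k+2}}$ on the other, which after division is exactly relation \eqref{sl_u}. But from that point on your text is a plan rather than a proof: ``carrying the bookkeeping to its conclusion should show\dots'' and ``I expect the bound $k\le 5$ to emerge\dots'' are placeholders for the entire argument. The concrete mechanism is missing. What actually happens is a separation-of-variables cascade: for $k\ge 4$ the right-hand side of \eqref{sl_u} involves only the variables $v_j$ with $j\ge 2$, so $\frac{\partial\log g}{\partial v_k}$ is independent of $v_1$ and $v$ and $g$ factors as in \eqref{vid_g}; for $k\ge 6$ a further comparison of dependencies yields \eqref{sl3_u}; and then differentiating \eqref{usl_z1} with respect to $v_{k+1}$ and subsequently $v_1$ kills every term except $\frac{\partial^2 g}{\partial v_k\partial v}=\frac1g\frac{\partial g}{\partial v_k}\frac{\partial g}{\partial v}$, which must vanish --- contradicting \eqref{usl-g}. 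None of these three steps (the factorization, the intermediate constraint, the final mixed derivative that produces the contradiction) appears in your proposal, and you never identify what the contradiction will actually be.

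A second, more substantive worry is that the heuristic you lean on in your last paragraph cannot work as stated. You correctly observe in your first paragraph that both sides of \eqref{usl_z} span exactly the shifts from $v_{k+2}$ down to $v_{-2}$ for \emph{every} $k$, so no counting of extremal windows, and no matching of ``how far each side reaches,'' can ever produce a bound on $k$: the first-order boundary analysis is consistent for all orders. The bound $k\le 5$ is invisible at that level and only appears through second-order information --- mixed partial derivatives forcing $\log g$ to split additively into pieces with overlapping but distinct supports, which for large $k$ becomes incompatible with $g$ genuinely depending on both $v_k$ and $v$. Your proposed symmetric differentiation with respect to $v_{-2}$ is a reasonable alternative starting point, but without the factorization step it leads nowhere; as written, the proposal has a genuine gap precisely where you say ``the real work lies.''
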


\begin{proof} Let us rewrite condition \eqref{usl_z} as \begin{gather}D_t \log g=g_2+g_1-g_{-1}-g_{-2}.\label{usl_z1}\end{gather}
Differentiating \eqref{usl_z1} with respect to $v_{k+2}$ and dividing the result by $\frac{\partial f_k}{\partial v_{k+2}}$, one has
\begin{gather}\frac{\partial \log g}{\partial v_k}=\frac{\partial g_2}{\partial v_{k+2}}\Big/\frac{\partial f_k}{\partial v_{k+2}}\label{sl_u}.\end{gather}
We see that, if $k\geq 4$, then the right hand side of this relation and therefore the function $\frac{\partial \log g}{\partial v_k}$ do not depend on $v_1$,~$v$. Hence, the function $g$ can be represented in the form
\begin{gather}g=g^{(1)}(v_k,v_{k-1},\ldots,v_2)g^{(2)}(v_{k-1},\ldots,v_1,v).\label{vid_g}\end{gather}

Let us divide \eqref{sl_u} by $g_2^{(2)}$
\begin{gather} \frac{1}{g^{(2)}_2}\frac{\partial \log g^{(1)}}{\partial v_k}=\frac{\partial g_2^{(1)}}{\partial v_{k+2}}\Big/\frac{\partial f_k}{\partial v_{k+2}}\label{sl1_u}.\end{gather}
If $k\geq 6$, then both sides of this relation depend on $v_{k+1},v_k,\ldots,v_4$ only, therefore
\begin{gather*} \frac{\partial \log g^{(1)}}{\partial v_k}=g^{(2)}_2 f^{(1)}(v_{k+1},v_k,\ldots,v_4),\end{gather*} where $f^{(1)}$ is a new function. As the left hand side of \eqref{sl1_u} does not depend on $v_{k+1}$, we apply the operator $T^{-2}\frac{\partial }{\partial v_{k+1}}\log$ and derive the following consequence
\begin{gather} \frac{\partial \log g^{(2)}}{\partial v_{k-1}}=-\frac{\partial \log f^{(1)}_{-2}}{\partial v_{k-1}}=f^{(2)}(v_{k-1},\ldots,v_2)\label{sl3_u}\end{gather} with a new function $f^{(2)}$.

Differentiating the main equation \eqref{usl_z1} for $g$ and $f$ with respect to $v_{k+1}$, we get
\begin{gather*} \frac{\partial \log g}{\partial v_k}\frac{\partial f_k}{\partial v_{k+1}}+\frac{\partial \log g}{\partial v_{k-1}}\frac{\partial f_{k-1}}{\partial v_{k+1}}=\frac {\partial g_2}{\partial v_{k+1}}+\frac{\partial g_1}{\partial v_{k+1}}.\end{gather*}
Then we differentiate the result with respect to $v_1$
\begin{gather*}\frac{\partial ^2g_1}{\partial v_{k+1}\partial v_1}=\frac{\partial f_k}{\partial v_{k+1}}\frac{\partial ^2\log g}{\partial v_k\partial v_1}+\frac{\partial f_{k-1}}{\partial v_{k+1}}\left(\frac{\partial ^2\log g^{(1)}}{\partial v_{k-1}\partial v_1}+\frac{\partial ^2\log g^{(2)}}{\partial v_{k-1}\partial v_1}\right).\end{gather*}
Here $\frac{\partial ^2\log g}{\partial v_k\partial v_1}=0$ due to \eqref{vid_g}, $\frac{\partial \log g^{(1)}}{\partial v_1}=0$ due to the definition of $g^{(1)}$ in \eqref{vid_g}, and $\frac{\partial ^2\log g^{(2)}}{\partial v_{k-1}\partial v_1}=0$ due to~\eqref{sl3_u}. Therefore $\frac{\partial ^2g}{\partial v_k\partial v}=0$, i.e.,
\begin{gather*}\frac{\partial ^2g}{\partial v_k\partial v}=\frac{\partial g^{(1)}}{\partial v_k}\frac{\partial g^{(2)}}{\partial v}=\frac{1}{g}\frac{\partial g}{\partial v_k}\frac{\partial g}{\partial v}=0,\end{gather*} but the last equality contradicts the conditions \eqref{usl-g}.
\end{proof}

\section{Modifications of the INB equation of the levels 1 and 2}\label{low}

We classify here modifications of the INB equation of the levels 1 and 2, which correspond to transformations of the orders 1 and 2. In the classification, we use differential consequences of the determining equation \eqref{usl_z} for the functions $f$ and $g$, as in the proof of Theorem \ref{order}, and we are trying all possible cases.

If necessary, we can change $v$ in an equation \eqref{eqv} and corresponding transformation \eqref{tran} by using the point transformation
\begin{gather}\tilde v=\vartheta(v)\label{point}.\end{gather} The transformation \eqref{tran} and the time $t$ in~\eqref{eqv} can be changed by the transformation
\begin{gather} \tilde u=\eta u,\qquad \tilde t=t/\eta\label{scale}\end{gather} leaving the INB equation \eqref{INB} invariant.
The classification will be carried out up to these autonomous point transformations \eqref{point} and \eqref{scale}.

An exact statement of the result will be given later. First of all, let us enumerate all possible modifications of the first level.

\medskip

{\centerline{{\bf List 1.} Modifications of the first level.}}
\vspace{-7mm}

\spisok{
\dot v=v(v_2v_1-v_{-1}v_{-2}),\stepcounter{eqlist}\tag{\theeqlist}\label{eqlist1}\\
\dot v=v\left(\frac{v_2}{v_1}+2\frac{v_1}v+2\frac{v}{v_{-1}}+\frac{v_{-1}}{v_{-2}}\right)+cv,\stepcounter{eqlist}\tag{\theeqlist}\label{eqlist2}\\
\dot v=v\big(v_2v_1^\kappa-\kappa^2v_1v^\kappa-vv_{-1}^\kappa+\kappa^2v_{-1}v_{-2}^\kappa\big),\stepcounter{eqlist}\tag{\theeqlist}\label{eqlist_nu}\\
\dot v=(v_{2}-v_{1}+a)(v-v_{-1}+a)+(v_1-v+a)(v_{-1}-v_{-2}+a)\nonumber\\
\hphantom{\dot v=}{}+(v_1-v+a)(v-v_{-1}+a)+c, \stepcounter{eqlist}\tag{\theeqlist}\label{eqlist4}\\
\dot v=\phi(v_2-v_1)\phi(v-v_{-1})+\phi(v_1-v)\phi(v_{-1}-v_{-2})\nonumber\\
\hphantom{\dot v=}{}+\phi(v_1-v)\phi(v-v_{-1})+c.\stepcounter{eqlist}\tag{\theeqlist}\label{eqlist5}
}

In equation \eqref{eqlist_nu} and in all the lists below
\begin{gather} \kappa=\big(1\pm {\rm i}\sqrt3\big)/2\label{eqkappa},\end{gather} i.e., $\kappa^3=-1$.
That is why one has in \eqref{eqlist_nu} two cases corresponding to the signs $+$ and $-$.
In equation \eqref{eqlist5} and in all the lists below, the function $\phi$ satisfies the differential equation
\begin{gather}\phi'=\frac{\phi-1}{\phi}\label{eqphi}.\end{gather} This function $\phi$ can be defined as
\begin{gather*}\phi(x)=1+\psi^{-1}(x+a), \qquad \psi(u)=u+\log u.\end{gather*}

Here and below $a$ and $c$ are arbitrary constants.
The constant $c$ indicates the existence of a~point symmetry. For example, equation~\eqref{eqlist2} has the point symmetry $v_\tau=v$, while equations~\eqref{eqlist4} and~\eqref{eqlist5} have the point symmetry $v_\tau=1.$ Let us remind that these point symmetries correspond to the one-parameter groups of auto-transformations $v\rightarrow {\rm e}^\tau v$ and $v\rightarrow v+\tau$, respectively.

The constants $a$ and $c$ can be removed by using one of the following non-autonomous $n$- and $t$-dependent point transformations
\begin{gather*}v=\tilde v {\rm e}^{ct},\qquad v=\tilde v+ct, \qquad v=\tilde v-an,\end{gather*} where $\tilde v$ is a new unknown function. The same is true for equations of Lists 2 and 3 below, except for equation~\eqref{eqlist_3_6}. In equations \eqref{eqlist_3_4}--\eqref{eqlist_3_81} of List 3, we can make $a=1$ by the transformation $z=\tilde z a^{-n}.$

Transformations corresponding to the equations of List 1 are presented in List 1$'$.

\medskip

{ \centerline{{\bf List 1$'$.} Transformations of the first order.}}
\vspace{-7mm}

\spisok{
u=v_1v,\stepcounter{trans}\tag{\thetrans}\label{trans1}\\
u=\frac{v_1}v,\stepcounter{trans}\tag{\thetrans}\label{trans2} \\
u=v_1v^\kappa,\stepcounter{trans}\tag{\thetrans}\label{trans_nu} \\
u=v_1-v+a,\stepcounter{trans}\tag{\thetrans}\label{trans3} \\
u=\phi(v_1-v)-1.\stepcounter{trans}\tag{\thetrans}\label{trans4}
}
Here and in all the lists below, a transformation with the number (TX) corresponds to the equation with the number (EX) for any X.

\begin{theorem}\label{ord1} If an equation \eqref{eqv} is transformed into the INB equation \eqref{INB} by a transformation~\eqref{tran} with $k=1$, then up to the autonomous point transformations~\eqref{point} and~\eqref{scale} it coincides to one of equations \eqref{eqlist1}--\eqref{eqlist5} of List~{\rm 1}. Equations \eqref{eqlist1}--\eqref{eqlist5} are transformed into the INB equation by transformations \eqref{trans1}--\eqref{trans4}, respectively.
\end{theorem}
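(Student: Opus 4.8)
\textit{Approach.} The plan is to treat every $v_j$ as an independent variable and to mine the determining equation \eqref{usl_z1}, written for $k=1$ as $D_t\log g=g_2+g_1-g_{-1}-g_{-2}$ with $D_t\log g=\frac1g\big(\frac{\partial g}{\partial v_1}f_1+\frac{\partial g}{\partial v}f\big)$, by the same differential-consequence technique used in the proof of Theorem~\ref{order}. First I would exploit the extreme variables: on the right $v_3$ occurs only through $g_2$ and $v_{-2}$ only through $g_{-2}$, while on the left $v_3$ occurs only through $f_1$ and $v_{-2}$ only through $f$. Differentiating \eqref{usl_z1} with respect to $v_3$ and to $v_{-2}$ and integrating the resulting first-order relations (the $k=1$ instance of \eqref{sl_u}) reconstructs the shape of $f$: it must have the form $f=A(v,v_{-1})g_1-B(v_1,v)g_{-2}+(\text{terms of lower structure})$, where $A=g\big/\frac{\partial g}{\partial v_1}$ and $B=g\big/\frac{\partial g}{\partial v}$, exactly as the proof of Theorem~\ref{order} reconstructs the factorisation \eqref{vid_g} of $g$.

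Next I would substitute this $f$ back into \eqref{usl_z1}. The terms $g\,g_2$ and $g\,g_{-2}$ cancel against the reconstructed pieces, leaving an identity among $g$, $g_1$, $g_{-1}$ and the auxiliary one- and three-variable functions. Taking the mixed derivative with respect to $v_2$ and $v_{-1}$ annihilates the auxiliary functions and yields a relation involving only $g$ and its first derivatives, from which I would conclude that $\frac{\partial g}{\partial v}\big/\frac{\partial g}{\partial v_1}$ is multiplicatively separable; equivalently $g(v_1,v)=G\big(P(v_1)+Q(v)\big)$ for one-variable functions $G$, $P$, $Q$. This single reduction already splits the problem into the multiplicatively separable case $g=H(v_1)\Psi(v)$ (exponential $G$) and the genuinely additive case (non-exponential $G$).

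Then I would normalise: the point transformation \eqref{point} lets me take $P=\mathrm{id}$, so $g=G(v_1+\psi(v))$, and the scaling \eqref{scale} fixes the normalisation of $u$ and of $t$. Feeding $g=G(v_1+\psi(v))$ into the reduced determining equation and separating variables produces coupled ODEs for $G$ and $\psi$. In the exponential branch this forces $g=v_1v^{\alpha}$ with $\alpha$ a root of a characteristic condition ($\alpha=1$ or $\alpha^{3}=-1$): the primitive roots $\alpha=\kappa$ of \eqref{eqkappa} give \eqref{eqlist_nu}, the resonant value $\alpha=-1$ degenerates to \eqref{eqlist2} with its doubled coefficients and free constant $c$, and $\alpha=1$ gives \eqref{eqlist1}. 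In the non-exponential branch the same separation forces $\psi$ to be affine of slope $-1$, so $g=G(v_1-v+a)$, and leaves an ODE for $G$ whose admissible solutions are the linear profile, giving \eqref{eqlist4}, and the profile $G=\phi-1$ with $\phi$ solving \eqref{eqphi}, giving \eqref{eqlist5}. For each surviving branch I would read off $f$, match it to the corresponding member of List~1, and check that the transformation is the associated \eqref{trans1}--\eqref{trans4}, discarding every branch that violates the nondegeneracy \eqref{usl-g}.

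\textit{Main obstacle.} The difficulty is concentrated in the last step, the exhaustive case analysis. The reductions above are essentially forced, but solving the coupled ODEs for $G$ and $\psi$ exactly, tracking every degenerate sub-branch (vanishing of a partial derivative, the resonances at $\alpha=\pm1$ that change the coefficient structure), and certifying that nothing beyond the five listed equations survives --- all while keeping $\frac{\partial g}{\partial v_1}\neq0$ and $\frac{\partial g}{\partial v}\neq0$ --- is where the real work and the risk of overlooking a case lie.
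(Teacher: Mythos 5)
Your proposal is correct in outline and follows essentially the same route as the paper, which gives no detailed proof of Theorem~\ref{ord1} beyond the remark that the classification uses differential consequences of the determining equation \eqref{usl_z} as in the proof of Theorem~\ref{order} together with an exhaustive case analysis. Your expansion of that strategy --- extracting the shape of $f$ from the $v_3$- and $v_{-2}$-derivatives, using the mixed $v_2$, $v_{-1}$-derivative to force $g=G(P(v_1)+Q(v))$, normalising by \eqref{point} and \eqref{scale}, and splitting into the exponential and additive branches whose surviving cases match \eqref{trans1}--\eqref{trans4} --- is consistent with Lists~1 and~1$'$, and you correctly locate the real work in the final case analysis.
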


Equations \eqref{eqlist1} and \eqref{eqlist4} together with transformations \eqref{trans1} and \eqref{trans2} are known, see \cite[List~4]{GYL17}, \cite[List~3]{GYL18} and references therein. Transformation \eqref{trans_nu} was discussed in \cite[Section~2.2]{GYL16}. All these transformations \eqref{trans1}--\eqref{trans4} are linearizable in the terminology of~\cite{GYL16}. Such transformations can be represented as compositions of point transformations and of linear transformations with constant coefficients. For instance, in the case of transformations \eqref{trans1}--\eqref{trans_nu}, one has
\begin{gather*} \hat u=\hat v_1+\alpha\hat v,\qquad u={\rm e}^{\hat u},\qquad v={\rm e}^{\hat v}.\end{gather*}

In the next list we enumerate all possible modifications of the INB equation of the second level.

\medskip

{ \centerline{{\bf List 2.} Modifications of the second level.}}
\vspace{-5mm}

\spisok{
\dot w=w\left(\frac{w_2}{w}+\frac{w_1}{w_{-1}}+\frac{w}{w_{-2}}\right)+cw, \stepcounter{eqlist}\tag{\theeqlist}\label{eqlist_2_1}\\
\dot w=w^2(w_2w_1-w_{-1}w_{-2}), \stepcounter{eqlist}\tag{\theeqlist}\label{eqlist_2_3}\\
\dot w=w\big(w_2w_1^{1+\kappa}w^\kappa-\kappa w_1w^{1+\kappa}w_{-1}^\kappa+\kappa^2 ww_{-1}^{1+\kappa}w_{-2}^\kappa\big),
\stepcounter{eqlist}\tag{\theeqlist}\label{eqlist_2_kappa1}\\
\dot w=w\big( w_2w_1^{\kappa-1}w^{-\kappa}+\big(1-\kappa^2\big)w_1w^{\kappa-1}w_{-1}^{-\kappa}-\kappa^2ww_{-1}^{\kappa-1}w_{-2}^{-\kappa}\big)+cw,
\stepcounter{eqlist}\tag{\theeqlist}\label{eqlist_2_kappa2}\\
\dot w=(w_1+w)(w+w_{-1})\left(w_{2}+w_1-w_{-1}-w_{-2}\right),
\stepcounter{eqlist}\tag{\theeqlist}\label{eqlist_2_+}\\
\dot w=(w_1-w+a)(w-w_{-1}+a)\left(w_{2}-w_1+w_{-1}-w_{-2}+2a\right)+c,
\stepcounter{eqlist}\tag{\theeqlist}\label{eqlist_2_-}\\
\dot w=\phi(w_1-w)\phi(w-w_{-1})\left(\phi(w_{2}-w_1)+\phi(w_{-1}-w_{-2})-1\right)\nonumber\\
\hphantom{\dot w=}{} -\phi(w_2-w_1)\phi(w-w_{-1})-\phi(w_1-w)\phi(w_{-1}-w_{-2})+c,
\stepcounter{eqlist}\tag{\theeqlist}\label{eqlist_2_fun}\\
\dot w=(w_1-w)(w-w_{-1})\left(\frac{w_2}{w_1}-\frac{w_{-2}}{w_{-1}}\right),
\stepcounter{eqlist}\tag{\theeqlist}\label{eqlist_2_sl}\\
\dot w=w(w+1)(w_2w_1-w_{-1}w_{-2})
\stepcounter{eqlist}\tag{\theeqlist}.\label{eqlist_2_4}
}

In equations~\eqref{eqlist_2_kappa1} and~\eqref{eqlist_2_kappa2}, $\kappa$ is defined by \eqref{eqkappa}, i.e., in each of these equations there are two cases. In equation \eqref{eqlist_2_fun} the function $\phi$ is defined by~\eqref{eqphi}, while $a$ and $c$ are arbitrary constants.
In the next list, corresponding transformations are presented.

\medskip
{ \centerline{{\bf List 2$'$.} Transformations of the second order.}}

\vspace{-5mm}

\spisok{
u=\frac{w_2}w,\stepcounter{trans}\tag{\thetrans}\label{trans_2_1}\\
u=w_2w_1w,\stepcounter{trans}\tag{\thetrans}\label{trans_2_3}\\
u=w_2w_1^{1+\kappa}w^\kappa,\stepcounter{trans}\tag{\thetrans}\label{trans_2_kappa1}\\
u=w_2w_1^{\kappa-1}w^{-\kappa},\stepcounter{trans}\tag{\thetrans}\label{trans_2_kappa2}\\
u=(w_2+w_1)(w_1+w),\stepcounter{trans}\tag{\thetrans}\label{trans_2_+}\\
u=(w_2-w_1+a)(w_1-w+a),\stepcounter{trans}\tag{\thetrans}\label{trans_2_-}\\
u=(\phi(w_2-w_1)-1)(\phi(w_1-w)-1),\stepcounter{trans}\tag{\thetrans}\label{trans_2_fun}\\
u=\frac{(w_2-w_1)(w_1-w)}{w_1}, \stepcounter{trans}\tag{\thetrans}\label{trans_2_sl}\\
u=w_2w_1(w+1),\stepcounter{trans}\tag{\thetrans.a}\label{trans_2_4a}\\
u=(w_2+1)w_1w. \tag{\thetrans.b}\label{trans_2_4b}
}

\begin{theorem}\label{ord2} If an equation \eqref{eqv} is transformed into the INB equation \eqref{INB} by a transformation~\eqref{tran} with $k=2$, then up to the autonomous point transformations~\eqref{point} and~\eqref{scale} it coincides to one of equations \eqref{eqlist_2_1}--\eqref{eqlist_2_4} of List~{\rm 2}. Equations \eqref{eqlist_2_1}--\eqref{eqlist_2_sl} are transformed into the INB equation by transformations \eqref{trans_2_1}--\eqref{trans_2_sl}, respectively. Equation~\eqref{eqlist_2_4} is transformed into the INB equation by any of transformations \eqref{trans_2_4a}--\eqref{trans_2_4b}.
\end{theorem}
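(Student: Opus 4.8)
The plan is to solve the determining equation \eqref{usl_z1} for $k=2$ by exploiting its differential consequences, exactly in the spirit of the proof of Theorem \ref{order}, but now carrying the analysis all the way to an explicit list rather than stopping at a contradiction. First I would extract the structural form of $g$. Applying $\partial/\partial v_{k+2}=\partial/\partial v_4$ to \eqref{usl_z1} and dividing by $\partial f_2/\partial v_4$ gives the analogue of \eqref{sl_u}, namely $\partial_{v_2}\log g = (\partial_{v_4} g_2)/(\partial_{v_4} f_2)$; here the right-hand side cannot be simplified to drop $v_1,v$ as it did for $k\ge 4$, so instead I would use the lower-order differential consequences (differentiating \eqref{usl_z1} in $v_3$ and then in the various $v_j$, as is done with $v_{k+1}$ in the proof of Theorem \ref{order}) to pin down the dependence of $\log g$ on its three arguments $v_2,v_1,v$. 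The goal of this first stage is to show that $\log g$ is additively/multiplicatively separable in a controlled way, reducing $g$ to one of a few functional shapes.

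Next I would determine $f$ from $g$. Once the admissible forms of $g=g(v_2,v_1,v)$ are known, equation \eqref{usl_z} with $D_t g=\sum_{j=0}^{2}\frac{\partial g}{\partial v_j}f_j$ becomes, after substituting the shifts $g_2,g_1,g_{-1},g_{-2}$, a functional-differential identity in the independent variables $v_j$, $j\in\mathbb{Z}$. The standard device is to isolate the highest variable $v_4$ (and symmetrically $v_{-4}$): the coefficient of the top shift forces $f$ to be expressible through $g$ and its shifts, and then collecting terms by their dependence on the remaining variables yields $f$ explicitly together with constraints on the free functions and constants appearing in $g$. This is where the parameter $c$ (point symmetry) and the special exponent $\kappa$ from \eqref{eqkappa} enter: the constraint equations will be algebraic conditions such as $\kappa^3=-1$ or the ODE \eqref{eqphi} for $\phi$, which carve the continuum of formal solutions into the discrete list \eqref{eqlist_2_1}--\eqref{eqlist_2_4}.

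The case analysis is the bulk of the argument. The branching is governed by whether the separated pieces of $\log g$ are linear, power-like, exponential, or of the special $\phi$-type, and I expect this to split into the same qualitative families already visible in List 2: logarithmic-derivative (rational/power) cases giving \eqref{eqlist_2_1}--\eqref{eqlist_2_kappa2}, additive cases giving \eqref{eqlist_2_+}--\eqref{eqlist_2_fun}, and the exceptional cases \eqref{eqlist_2_sl}, \eqref{eqlist_2_4}. Throughout I would normalize using the point transformation \eqref{point} and the scaling \eqref{scale} to remove inessential constants, so that each branch terminates in exactly one representative equation. For equation \eqref{eqlist_2_4} the analysis must account for the two distinct transformations \eqref{trans_2_4a}--\eqref{trans_2_4b} mapping it to \eqref{INB}; I would verify these as separate solutions of \eqref{usl_z} for the same $f$.

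The main obstacle will be the completeness and organization of the case distinction: ensuring that every branch of the functional equation is exhausted, that no spurious solution survives after applying \eqref{point} and \eqref{scale}, and that the degenerate sub-cases (where a separating factor is constant, or where two functional forms coincide, producing the exceptional equations \eqref{eqlist_2_sl} and \eqref{eqlist_2_4}) are correctly identified rather than discarded. The converse direction—that each equation \eqref{eqlist_2_1}--\eqref{eqlist_2_4} is indeed transformed into \eqref{INB} by the stated transformation—is a routine verification: substitute the given $g$ into \eqref{usl_z} and check the identity, using $\kappa^3=-1$ or \eqref{eqphi} where relevant.
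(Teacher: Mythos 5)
Your plan follows essentially the same route as the paper, which says only that the classification is obtained from differential consequences of the determining equation \eqref{usl_z}, as in the proof of Theorem \ref{order}, followed by trying all possible cases; the paper provides no further detail of that case analysis, so your outline is at the same level of completeness. Your $k=2$ consequence $\frac{\partial \log g}{\partial v_2}=\frac{\partial g_2}{\partial v_4}\Big/\frac{\partial f_2}{\partial v_4}$ is the correct analogue of \eqref{sl_u}, and your observation that it no longer forces the separation \eqref{vid_g} (unlike $k\geq4$) correctly identifies why the exhaustive branching is required.
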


Equations (\ref{eqlist_2_3}), (\ref{eqlist_2_+}), (\ref{eqlist_2_-}), (\ref{eqlist_2_sl}) together with corresponding transformations were discussed in \cite{GYL17,GYL18}, see also references therein. The important case~\eqref{eqlist_2_4} with transformations \eqref{trans_2_4a}--\eqref{trans_2_4b} has been found \cite[equation~(17.6.24)]{s03}.

All the transformations of List 2$'$ except for \eqref{trans_2_4a} and \eqref{trans_2_4b} are linearizable. For transformations \eqref{trans_2_1}--\eqref{trans_2_kappa2} one has
\begin{gather*} \hat u=\hat v_2+\alpha\hat v_1+\beta \hat v,\qquad u={\rm e}^{\hat u},\qquad v={\rm e}^{\hat v},\end{gather*} where $\alpha$, $\beta$ are some constants. Equations \eqref{eqlist_2_+}--\eqref{eqlist_2_fun} are transformed into~\eqref{eqlist1} by transformations
\begin{gather*} v=w_1+w,\qquad v=w_1-w+a,\qquad v=\phi(w_1-w)-1,\end{gather*} respectively, and therefore corresponding transformations \eqref{trans_2_+}--\eqref{trans_2_fun} are also linearizable as the compositions of linearizable transformations. Transformation~\eqref{trans_2_sl} is linearizable in a more complicated way, see \cite[Section~3]{GYL16}.

Both transformations \eqref{trans_2_4a} and \eqref{trans_2_4b} are of Miura type, see a comment in \cite[Section~2.1]{GYL16}.

\section{Modifications of the INB equation of the level 3}\label{third}
Here we classify all integrable modifications of the INB equation of the third level. First we give a list of equations.

\medskip

{ \centerline{{\bf List 3.} Modifications of the third level.}}
\vspace{-5mm}

\spisok{
\dot z=z\left(\frac{z_2}{z_{-1}}+\frac{z_1}{z_{-2}}\right)+cz,
\stepcounter{eqlist}\tag{\theeqlist}\label{eqlist_3_8}\\
\dot z=z_1z^3z_{-1}(z_2z_1-z_{-1}z_{-2}),
\stepcounter{eqlist}\tag{\theeqlist}\label{eqlist_3_pr}\\
\dot z=z\big(z_2z_1^\kappa z^{-1} z_{-1}^{-\kappa}-\kappa^2z_1z^\kappa z_{-1}^{-1}z_{-2}^{-\kappa}\big)+cz,
\stepcounter{eqlist}\tag{\theeqlist}\label{eqlist_3_2}\\
\dot z=(z_2-z+a)(z_1-z_{-1}+a)(z-z_{-2}+a)+c,
\stepcounter{eqlist}\tag{\theeqlist}\label{eqlist_3_1}\\
\dot z=-\big(T+1+T^{-1}\big)\frac{1}{(z_1-z+a)(z-z_{-1}+a)}+c,
\stepcounter{eqlist}\tag{\theeqlist}\label{eqlist_3_aa}\\
\dot z=-\big(T-\kappa+\kappa^2 T^{-1}\big)\frac{1}{(z_1+\kappa z)(z+\kappa z_{-1})},
\stepcounter{eqlist}\tag{\theeqlist}\label{eqlist_3_aa1}\\
\dot z=w_1w+ww_{-1}+w_{-1}w_{-2}-w_1ww_{-1}-ww_{-1}w_{-2}+c,
\stepcounter{eqlist}\tag{\theeqlist}\label{eqlist_3_7}\\
\hphantom{\dot z=}{} \ w=\frac{1}{1-\phi(z_1-z)},\nonumber\\
\dot z=z(z_1z-1)(zz_{-1}-1)(z_2z_1-z_{-1}z_{-2}),
\stepcounter{eqlist}\tag{\theeqlist}\label{eqlist_3_3}\\
\dot z=\frac{(az_1-z)(az-z_{-1})(az_2z_{-2}+az_1z_{-1}-2z_1z_{-2})}{z_1z_{-1}z_{-2}}+cz,\qquad a\neq0,
\stepcounter{eqlist}\tag{\theeqlist}\label{eqlist_3_4}\\
\frac{\dot z}{a^2}=\frac{z_2(az+z_{-1})}{z_{-1}}+\frac{z_1z}{z_{-1}}+\frac{(az_1+z)z}{z_{-2}}+cz,
\stepcounter{eqlist}\tag{\theeqlist}\label{eqlist_3_5}\\
\dot z=-z\big(T+1+T^{-1}\big)\frac{zz_{-1}}{(az_1-z)(az-z_{-1})}+cz,\qquad a\neq0,
\stepcounter{eqlist}\tag{\theeqlist}\label{eqlist_3_81}\\
\dot z=-z\big(T-\kappa+\kappa^2T^{-1}\big)\frac{1}{(z_1z^\kappa-1)(zz_{-1}^\kappa-1)},
\stepcounter{eqlist}\tag{\theeqlist}\label{eqlist_3_71}\\
\dot z=\theta(z_1-z)\theta(z-z_{-1}) [\theta(z_{2}-z_1)+\theta(z_{-1}-z_{-2})+a ]\nonumber\\
\hphantom{\dot z=}{} +a[\theta(z_2-z_1)\theta(z_1-z)+\theta(z-z_{-1})\theta(z_{-1}-z_{-2})]+c,
\stepcounter{eqlist}\tag{\theeqlist}\label{eqlist_3_6}\\
\theta'=\frac{\theta(\theta+1)}{\theta+a},\qquad a\neq 0,\qquad a\neq1,\nonumber\\
\dot z=4\big(z^2-1\big)(2+(z_{-1}-1)T^{-1}-(z_1+1)T)\Omega,
\stepcounter{eqlist}\tag{\theeqlist}\label{eqlist_3_Mi}\\
\Omega=\frac{1}{[(z_1+1)(z-1)+4][(z+1)(z_{-1}-1)+4]},\nonumber\\
\dot z=-z\big(T+1+T^{-1}\big)\frac{z}{(z_1-z)(z-z_{-1})}.
\stepcounter{eqlist}\tag{\theeqlist}\label{eqlist_3_N}
}

Here $\kappa$ is defined by \eqref{eqkappa}, i.e., for each such an equation with a dependence on~$\kappa$, there are two cases. Besides, $a$ and $c$ are arbitrary constants, while in equation~\eqref{eqlist_3_7} the function $\phi$ is defined by~\eqref{eqphi}. Let us list now the corresponding transformations.

\medskip
{ \centerline{{\bf List 3$'$.} Transformations of the third order.}}

\vspace{-5mm}

\spisok{
u=\frac{z_3}z,
\stepcounter{trans}\tag{\thetrans}\label{trans_3_8}\\
u=z_3z_2^2z_1^2z,
\stepcounter{trans}\tag{\thetrans}\label{trans_3_pr}\\
u=z_3z_2^{\kappa}z_1^{-1}z^{-\kappa},
\stepcounter{trans}\tag{\thetrans}\label{trans_3_2}\\
u=(z_3-z_1+a)(z_2-z+a),
\stepcounter{trans}\tag{\thetrans}\label{trans_3_1}\\
u=\frac{1}{(z_3-z_2+a)(z_2-z_1+a)(z_1-z+a)},
\stepcounter{trans}\tag{\thetrans}\label{trans_3_aa}\\
u=\frac{1}{(z_3+\kappa z_2)(z_2+\kappa z_1)(z_1+\kappa z)},
\stepcounter{trans}\tag{\thetrans}\label{trans_3_aa1}\\
u=\frac{1}{(1-\phi(z_3-z_2))(1-\phi(z_2-z_1))(1-\phi(z_1-z))},
\stepcounter{trans}\tag{\thetrans}\label{trans_3_7}\\
u=(z_3z_2-1)(z_2z_1-1)z_1z,
\stepcounter{trans}\tag{\thetrans.a}\label{trans_3_3_a}\\
u=z_3z_2(z_2z_1-1)(z_1z-1),
\tag{\thetrans.b}\label{trans_3_3_b}\\
u=a\frac{(az_3-z_2)(az_2-z_1)}{z_2z},
\stepcounter{trans}\tag{\thetrans.a}\label{trans_3_4_a}\\
u=a\frac{z_3(az_2-z_1)(az_1-z)}{z_2z_1z},
\tag{\thetrans.b}\label{trans_3_4_b}\\
u=a^2\frac{z_3(az_1+z)}{z_1z},
\stepcounter{trans}\tag{\thetrans.a}\label{trans_3_5_a}\\
u=a^2\frac{az_3+z_2}{z},
\tag{\thetrans.b}\label{trans_3_5_b}\\
u=\frac{az_2z_1^2}{(az_3-z_2)(az_2-z_1)(az_1-z)},
\stepcounter{trans}\tag{\thetrans.a}\label{trans_3_81_a}\\
u=\frac{az_3z_1z}{(az_3-z_2)(az_2-z_1)(az_1-z)},
\tag{\thetrans.b}\label{trans_3_81_b}\\
u=\frac{z_1z^\kappa}{(z_3z_2^\kappa-1)(z_2z_1^\kappa-1)(z_1z^\kappa-1)},
\stepcounter{trans}\tag{\thetrans.a}\label{trans_3_71_a}\\
u=\frac{z_3z_2^\kappa}{(z_3z_2^\kappa-1)(z_2z_1^\kappa-1)(z_1z^\kappa-1)},
\tag{\thetrans.b}\label{trans_3_71_b}\\
u=\theta(z_3-z_2)\theta(z_2-z_1)(\theta(z_1-z)+1),
\stepcounter{trans}\tag{\thetrans.a}\label{trans_3_6_a}\\
u=(\theta(z_3-z_2)+1)\theta(z_2-z_1)\theta(z_1-z),
\tag{\thetrans.b}\label{trans_3_6_b}\\
u=\frac{4(z_2+1)\big(z_1^2-1\big)(z-1)}{[(z_3+1)(z_2-1)+4][(z_2+1)(z_1-1)+4][(z_1+1)(z-1)+4]},
\stepcounter{trans}\tag{\thetrans}\label{trans_3_Mi}\\
u=\frac{z_2z_1}{(z_3-z_2)(z_2-z_1)(z_1-z)}.
\stepcounter{trans}\tag{\thetrans}\label{trans_3_N}
}

\begin{theorem}\label{ord3} If an equation \eqref{eqv} is transformed into the INB equation \eqref{INB} by a transformation of the form \eqref{tran} with $k=3$, then up to autonomous point transformations~\eqref{point} and~\eqref{scale} it coincides with one of equations \eqref{eqlist_3_8}--\eqref{eqlist_3_N} of the List~{\rm 3}. Equations \eqref{eqlist_3_8}--\eqref{eqlist_3_N} are transformed into the INB equation by transformations \eqref{trans_3_8}--\eqref{trans_3_N}, respectively. For each of equations \eqref{eqlist_3_3}--\eqref{eqlist_3_6}, there are two transformations.
\end{theorem}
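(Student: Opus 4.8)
The plan is to prove this exactly as Theorem~\ref{order} was proved: treat the determining equation \eqref{usl_z1} as an identity in the independent variables $v_j$ and extract structural information by differentiating with respect to the extreme variables. Here $k=3$, so $g=g(v_3,v_2,v_1,v)$ while $f$ is the five-point function of \eqref{eqv}, and the right-hand side $g_2+g_1-g_{-1}-g_{-2}$ involves $v_5,\ldots,v_{-2}$. Differentiating \eqref{usl_z1} with respect to the top variable $v_5$ gives the $k=3$ instance of \eqref{sl_u}; since its right-hand side cannot depend on $v=v_0$, this forces $\partial\log g/\partial v_3$ to be a function of $v_3,v_2,v_1$ only, and by the mirror computation at the bottom variable $v_{-2}$, $\partial\log g/\partial v$ is a function of $v_2,v_1,v$ only. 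Because $k=3<4$, these two facts do not yet yield the factorization \eqref{vid_g}, so the argument has to be pushed further.

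Next I would differentiate \eqref{usl_z1} with respect to the interior variables $v_4$ and $v_{-1}$ and form mixed second derivatives, so as to isolate how $\log g$ couples each consecutive pair $(v_3,v_2)$, $(v_2,v_1)$, $(v_1,v)$. Writing $p=\partial\log g/\partial v_3$ and $q=\partial\log g/\partial v$ and eliminating $f_3$, $f_0$ through the extreme derivatives, I expect to arrive at a closed functional-differential system for $p$, $q$ and the interior part of $g$. Integrating that system is the heart of the matter: its solutions should split into the familiar alternatives in which the dependence of $g$ on each consecutive pair is either multiplicative (a monomial in powers of the $v_i$, as in \eqref{trans_3_pr}--\eqref{trans_3_2}), affine-additive (a product of factors $v_i-v_j+a$, as in \eqref{trans_3_1} and \eqref{trans_3_aa}), or one of the deformed dependences governed by the functions $\phi$ of \eqref{eqphi} and $\theta$; the constants $\kappa$ with $\kappa^3=-1$ should emerge as the admissible exponents in the multiplicative branch.

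Having fixed the possible forms of $g$, I would reconstruct $f$ in each branch by substituting back into \eqref{usl_z} and matching the coefficients of the independent monomials, then normalize the surviving constants by the transformations \eqref{point} and \eqref{scale}. This bookkeeping is what produces the explicit equations \eqref{eqlist_3_8}--\eqref{eqlist_3_N} together with \eqref{trans_3_8}--\eqref{trans_3_N}. The pairs of transformations attached to \eqref{eqlist_3_3}--\eqref{eqlist_3_6} I would explain as follows: for these special $f$ the determining equation is invariant under a reflection of the arguments of $g$, so two distinct solutions $g$ (the (a) and (b) transformations) give the same modification.

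The main obstacle is twofold. First, integrating the functional-differential system, where the non-elementary functions $\phi$ and $\theta$ (the latter defined by $\theta'=\theta(\theta+1)/(\theta+a)$) are forced to appear and must be recognized rather than guessed. Second, the combinatorial control: the three consecutive pairs may each carry a different elementary dependence, and one must enumerate all globally consistent combinations without omission or repetition up to \eqref{point} and \eqref{scale}. In particular the non-standard entries---the Miura-type equation \eqref{eqlist_3_Mi} with \eqref{trans_3_Mi} and the exotic linearizable case \eqref{eqlist_3_N}---sit in degenerate sub-branches of the dichotomy and are the easiest to miss, so the most delicate point is verifying that the case split is genuinely exhaustive.
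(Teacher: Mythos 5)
Your strategy coincides with the paper's: the only method statement the paper gives (at the start of Section~\ref{low}, and implicitly carried over to level~3) is that one takes differential consequences of the determining equation \eqref{usl_z}, as in the proof of Theorem~\ref{order}, and tries all possible cases; no further details are printed for Theorem~\ref{ord3}. Your opening deductions are correct: for $k=3$, differentiating \eqref{usl_z1} in $v_5$ and in $v_{-2}$ does show that $\partial\log g/\partial v_3$ is free of $v$ and $\partial\log g/\partial v$ is free of $v_3$, and you rightly observe that this falls short of the factorization \eqref{vid_g}, so the branching must be handled directly. Like the paper, however, you only describe the ensuing case analysis rather than execute it, so what you have is a correct plan whose substance --- the exhaustive enumeration producing \eqref{eqlist_3_8}--\eqref{eqlist_3_N} and the verification of \eqref{trans_3_8}--\eqref{trans_3_N} --- remains to be carried out. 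One genuine divergence worth noting: the paper does not attribute the paired transformations for \eqref{eqlist_3_3}--\eqref{eqlist_3_6} to a reflection symmetry, but to the fact that each of these equations is mapped onto \eqref{eqlist_2_4} by one of the first-order auxiliary transformations \eqref{z21}--\eqref{z27}, and \eqref{eqlist_2_4} admits the two Miura transformations \eqref{trans_2_4a}, \eqref{trans_2_4b} into the INB equation; composing gives the (a) and (b) members of each pair. Your reflection argument (it must be combined with $t\to-t$ and a shift) does also generate the second member, but the composition picture is what the paper uses and is what explains precisely which equations of List~3 acquire two transformations.
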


Equations (\ref{eqlist_3_pr}), (\ref{eqlist_3_3}) together with corresponding transformations were discussed in \cite{GYL17,GYL18}, see also references therein. Equation \eqref{eqlist_3_1} with $a=c=0$ has been presented \cite{GYL16}. Transformations \eqref{trans_3_8}--\eqref{trans_3_2} are linearized as follows
\begin{gather}\hat u=\hat v_3+\alpha\hat v_2+\beta \hat v_1+\gamma \hat v,\qquad u={\rm e}^{\hat u},\qquad v={\rm e}^{\hat v},\label{lin}\end{gather} where $\alpha$, $\beta$, $\gamma$ are some constants.

For further discussion, we will need some auxiliary transformations.

\medskip
{ \centerline{{\bf List 3$''$.} Auxiliary transformations.}}
\vspace{-7mm}

\spisok{
\eqref{eqlist_3_1}\to\eqref{eqlist1}\colon \qquad v=z_2-z+a, \tag{T$^*18$}\label{zz18}
\\\eqref{eqlist_3_aa}\to\eqref{eqlist_2_3}\colon\qquad w=\frac{1}{z_1-z+a},\tag{T$^*19$}
\\\eqref{eqlist_3_aa1}\to\eqref{eqlist_2_3}\colon\qquad w=\frac{1}{z_1+\kappa z},\tag{T$^*20$}
\\\eqref{eqlist_3_7} \to\eqref{eqlist_2_3}\colon \qquad w=\frac{1}{1-\phi(z_1-z)},\tag{T$^*21$}\label{zz20}
\\\eqref{eqlist_3_3}\to \eqref{eqlist_2_4}\colon \qquad w=z_1z-1,\label{z21}\tag{T$^*22$}
\\\eqref{eqlist_3_4}\to \eqref{eqlist_2_4}\colon \qquad w=a\frac{z_1}{z}-1,\tag{T$^*23$}
\\\eqref{eqlist_3_5}\to \eqref{eqlist_2_4}\colon \qquad w=a\frac{z_1}z,\tag{T$^*24$}
\\\eqref{eqlist_3_81}\to \eqref{eqlist_2_4}\colon \qquad w=\frac{z}{az_1-z},\tag{T$^*25$}
\\\eqref{eqlist_3_71}\to \eqref{eqlist_2_4}\colon \qquad w=\frac{1}{z_1 z^{\kappa}-1},\tag{T$^*26$}
\\\eqref{eqlist_3_6}\to \eqref{eqlist_2_4}\colon \qquad w=\theta(z_1-z),\label{z27}\tag{T$^*27$}
\\\eqref{eqlist_3_Mi}\to \eqref{eqlist_2_4}\colon \qquad w=-\frac{(z_1+1)(z-1)}{(z_1+1)(z-1)+4}.\tag{T$^*28$}\label{zz27}}

Transformations \eqref{trans_3_1}--\eqref{trans_3_7} are compositions of linearizable transformations \eqref{zz18}--\eqref{zz20} and of one of linearizable transformations \eqref{trans1} or \eqref{trans_2_3} and therefore are linearizable too.

{\sloppy All transformations \eqref{z21}--\eqref{z27} are linearizable. In the case of equations \eqref{eqlist_3_3}--\eqref{eqlist_3_6}, transformations \eqref{trans_3_3_a}--\eqref{trans_3_6_a} are compositions of transformations \eqref{z21}--\eqref{z27} and of~\eqref{trans_2_4a}, while \eqref{trans_3_3_b}--\eqref{trans_3_6_b} are compositions of transformations \eqref{z21}--\eqref{z27} and of~\eqref{trans_2_4b}. As we have mentioned above, transformations~\eqref{trans_2_4a} and~\eqref{trans_2_4b} are of Miura type, and for this reason all the transformations associated with (\ref{eqlist_3_3})--(\ref{eqlist_3_6}) are compositions of linearizable and Miura type transformations.

}

Transformation \eqref{trans_3_Mi} is a composition of two Miura type transformations, see details in Section~\ref{secN2} below.
Transformation \eqref{trans_3_N} is linearized in a non-standard way, see Section~\ref{secN1}.

\section{The most interesting examples}\label{secN}

Here we discuss in detail equations \eqref{eqlist_3_N} and \eqref{eqlist_3_Mi} and corresponding transformations \eqref{trans_3_N} and \eqref{trans_3_Mi}.

\subsection{Equation (\ref{eqlist_3_N})}\label{secN1}

Transformation \eqref{trans_3_N} corresponding to \eqref{eqlist_3_N} is quite similar to \eqref{trans_2_sl} which is discussed in detail in \cite[Section 3]{GYL16}. As it will be shown below, unlike the most of linearizable transformations presented in this paper, it is a composition of linearizable transformations in different directions.

In this case we need one auxiliary equation
\begin{gather}
\dot y=\frac{1}{(y_2-y_{-1})(y_1-y_{-2})}\label{eqy1}\end{gather}
and two obviously linearizable transformations
\begin{gather}
\eqref{eqy1}\to\eqref{eqlist_3_N}\colon \qquad z=-y_{2}y_{1}y\label{ztoy},\\
\eqref{eqy1}\to\eqref{eqlist_2_3}\colon \qquad w=\frac{1}{y-y_3}\label{wtoy}.
\end{gather}
Transformation \eqref{trans_3_N} can be decomposed as it is shown in the following diagram
\begin{gather}\begin{diagram} \label{pippo}
\eqref{eqlist_3_N} &
\rTo{}{\eqref{trans_3_N}} & \eqref{INB}
\\
\uTo{\eqref{ztoy}} & & \uTo{}{\eqref{trans_2_3}} \\
 \eqref{eqy1}& \rTo{\eqref{wtoy}} & \eqref{eqlist_2_3}
\end{diagram}\end{gather}

As we have mentioned in Section~\ref{theo}, transformation~\eqref{trans_3_N} allows one to construct conservation laws for the modified equation~\eqref{eqlist_3_N} in an explicit way, as this transformation is of the form \eqref{tran}. The construction of generalized symmetries is a more difficult problem. However, it is simplified in the case of linearizable transformations, as it is reduced to the use of linear transformations like~\eqref{lin}. The construction of modified equations by using linearizable transformations is discussed in \cite[Appendix~A]{GYL17}. Using that theory, we construct here a generalized symmetry for~\eqref{eqlist_3_N}.

The decomposition shown in diagram~\eqref{pippo} allows one to construct the generalized symmetry for~\eqref{eqlist_3_N} by using a known symmetry for the INB equation \eqref{INB}. It is even easier to use a~sym\-met\-ry, presented in~\cite{ztof91}, for its well-known modification~\eqref{eqlist_2_3}. At first we construct a~generalized symmetry for~\eqref{eqy1}, which is of the form
\begin{gather*} y_\tau=\frac{1}{(y_2-y_{-1})(y_1-y_{-2})}\big(T^3+T^2+T+1\big)\frac{1}{(y_1-y_{-2})(y-y_{-3})(y_{-1}-y_{-4})}.\end{gather*}
Now we can find a generalized symmetry for \eqref{eqlist_3_N}, which reads
\begin{gather*} z_\tau =z\big(T+1+T^{-1}\big)\frac{z\Xi}{(z_1-z)(z-z_{-1})},\\ \Xi =\big(T^3+T^2+T+1\big)\frac{z_{-1}z_{-2}}{(z-z_{-1})(z_{-1}-z_{-2})(z_{-2}-z_{-3})}.\end{gather*}

Equation \eqref{eqy1} exemplifies a modification of the INB equation \eqref{INB} of the highest possible fifth level. Corresponding transformation of \eqref{eqy1} into \eqref{INB} is the composition of transformations shown in diagram~\eqref{pippo} and it reads
\begin{gather*}u=-\frac{1}{y_5-y_2}\frac{1}{y_4-y_1}\frac{1}{y_3-y}.\end{gather*}

\subsection{Equation (\ref{eqlist_3_Mi})}\label{secN2}
Transformation \eqref{zz27}, unlike all other transformations of List $3''$, is of Miura type. Rewriting it in the form
\begin{gather} \label{tilw}\widetilde w=-\frac{4w}{w+1}=(z_1+1)(z-1),\end{gather} we see that it is the standard Miura transformation of $z$ to $\widetilde w$, cf.~\eqref{miura}.

Transformation \eqref{trans_3_Mi} is a composition of transformations~\eqref{zz27} and~\eqref{trans_2_4b}. The second natural composition of transformations~\eqref{zz27} and~\eqref{trans_2_4a} is not written down in List~$3'$, as it comes from~\eqref{trans_3_Mi} by using the point transformation
\begin{gather} \tilde z=\varsigma(z)=\frac{z+3}{1-z}.\label{avtoz}\end{gather} It is the auto-transformation of equation~\eqref{eqlist_3_Mi}. It is interesting that \begin{gather} \label{avtoz2}\varsigma^2(z)=\varsigma(\varsigma(z))=\frac{z-3}{z+1},\qquad \varsigma^3(z)=z,\end{gather} i.e., there are two different auto-transformations for equation~\eqref{eqlist_3_Mi}.

Using the Miura type transformation \eqref{zz27} and corresponding modification \eqref{eqlist_3_Mi}, let us construct in this section one more non-standard transformation and an integrable completely discrete quad-equation.

{\bf Non-standard transformation.} A Miura-like transformation can be constructed with the help of a theory developed in \cite{y93, y94}.

Using transformation \eqref{zz27}, let us construct a relation
\begin{gather}
w=-\frac{(z_1+1)(z-1)}{(z_1+1)(z-1)+4}=-\frac{(\hat z_1+1)(\hat z-1)}{(\hat z_1+1)(\hat z-1)+4}\label{zzhat}\end{gather} for two solutions $z$, $\hat z$ of equation \eqref{eqlist_3_Mi}. This relation is compatible with equation~\eqref{eqlist_3_Mi}. It is simplified in terms of the function $\widetilde w$ given by~\eqref{tilw}
\begin{gather*} \widetilde w=(z_1+1)(z-1)=(\hat z_1+1)(\hat z-1).\end{gather*} The latter can be rewritten as
\begin{gather*} \frac{z-1}{\hat z-1}=\frac{\hat z_1+1}{z_1+1}=-y_1,\end{gather*}
where $y_n$ is a new unknown function. We get a transformation
\begin{gather} y=-\frac{\hat z+1}{z+1},\qquad y_1=-\frac{z-1}{\hat z-1}\label{yz},\end{gather} which is invertible on the solutions of~\eqref{zzhat}
\begin{subequations}\label{zy1+zy2}
\begin{gather}
z=\frac{y_1y+2y_1+1}{1-y_1y},\label{zy1}\\
 \hat z=\frac{y_1y+2y+1}{y_1y-1}.\label{zy2}
 \end{gather}
\end{subequations} This transformation allows us to rewrite equation~\eqref{eqlist_3_Mi} in terms of~$y_n$.

Differentiating the first of relations \eqref{yz} with respect to the time in virtue of \eqref{eqlist_3_Mi} and substituting the functions (\ref{zy1+zy2}), we get an equation
\begin{gather} \label{eqy}\dot y=\frac{y(y+1)(y_1y-1)(yy_{-1}-1)(y_2y_1-y_{-1}y_{-2})}{(y_2y_1y+1)(y_1yy_{-1}+1)(yy_{-1}y_{-2}+1)}.\end{gather} This equation coincides with \cite[equation~(5.11)]{x18}, where $\chi=-1$, and it is equivalent to~\cite[equation~(59b)]{mx13} up to a point transformation. Equation~\eqref{eqy} is transformed into \eqref{eqlist_3_Mi} by any of transformations (\ref{zy1+zy2}). It should be remark that these two transformations are equivalent up to the point auto-transformation $y_n\to 1/y_n$ of equation~\eqref{eqy}. So, we have derived modification~\eqref{eqy} of equation~\eqref{eqlist_3_Mi} and two equivalent transformations that look like Miura type transformations.

Let us analyze in more detail transformation \eqref{zy1}. It can be rewritten as a discrete Riccati equation for the function~$y_n$
\begin{gather} (1+z)y_1y+2y_1+1-z=0.\label{ricy}\end{gather} In accordance with~\cite{GYL16} it should be the Miura type transformation. However, a particular solution of this equation can easily be found: $y_n^*\equiv -1.$ Therefore, unlike the case of general position, it can be linearized in an explicit way and can be solved by quadrature.

In fact, using this particular solution, let us change $y_n=y_n^*+1/\tilde{y}_n$ in order to get a non-autonomous linear equation
\begin{gather} \tilde y_1-\frac{1-z}{1+z}\tilde y-1=0.\label{eqyl}\end{gather} Substituting the relation
\begin{gather} \frac{1-z}{1+z}=\frac{x-x_{-1}}{x_1-x}\label{zx}\end{gather} in terms of a new function~$x_n$, we rewrite~\eqref{eqyl} as the equation
\begin{gather*} (T-1)[(x-x_{-1})\tilde y-x]=0,\end{gather*} which has the obvious solution
\begin{gather*}\tilde y=\frac{x}{x-x_{-1}}.\end{gather*}

As a result we have $y=-\frac{x_{-1}}{x}$. Starting from a given function $z_n$, we can find $x_n$ from~\eqref{zx} by using two discrete integrations, as
\begin{gather*} x-x_{-1}=r,\qquad \frac{r}{r_1}=\frac{1-z}{1+z}.\end{gather*}
So, the Riccati equation \eqref{ricy} is degenerate in the sense that it is solved by quadrature.

These results can be reformulated in terms of linearizable transformations and modifications of the INB equation. We are led to the following picture
\begin{gather}\begin{diagram} \label{yz*}
\eqref{eqy} &
\rTo{}{\eqref{zy1}} & \eqref{eqlist_3_Mi}
\\
\uTo{y=-\dfrac{x_{-1}}{x}} & & \uTo{}{\dfrac{1-z}{1+z}=\dfrac{r}{r_1}}\\
\eqref{eqx} & \rTo{r=x-x_{-1}} & \eqref{eqr} 
\end{diagram}\end{gather}
where
\begin{gather}\dot x=\frac{(x_2-x)(x_1-x_{-1})(x-x_{-2})}{(x_2-x_{-1})(x_1-x_{-2})}\label{eqx},\\
\dot r=\frac{r(r_2+r_1-r_{-1}-r_{-2})(r_1+r)(r+r_{-1})}{(r_2+r_1+r)(r_1+r+r_{-1})(r+r_{-1}+r_{-2})}.\label{eqr}
\end{gather}
We have that \eqref{zy1} is a complete analogue of transformations \eqref{trans_3_N} and \eqref{trans_2_sl}, i.e.,
it is a~composition of linearizable transformations in different directions. Therefore it is not of Miura type, but it is the linearizable transformation in accordance with the terminology of~\cite{GYL16}.

Equation \eqref{eqx} has been found in \cite[equation~(2.23)]{pn96}, \eqref{eqr} coincides with \cite[equation~(3.1b)]{shl14}, while~\eqref{eqy} has been discussed above. New objects in this diagram are equation~\eqref{eqlist_3_Mi} and transformation~\eqref{zy1}. Equations (\ref{eqy}), (\ref{eqx}), (\ref{eqr}) exemplify modifications of the INB equation \eqref{INB} of the fourth and fifth levels. Corresponding transformations into~\eqref{INB} are obtained as compositions of transformation \eqref{trans_3_Mi} and transformations shown in diagram~\eqref{yz*}. Those transformations are too cumbersome to be written here.

{\bf Integrable quad-equation.} It is known that differential-difference equations like~\eqref{eqv} and discrete equations on the square lattice (quad-equations) are closely related. Differential-dif\-fe\-ren\-ce equations define the generalized symmetries for quad-equations, while quad-equations can be interpreted as the B\"acklund transformations for differential-difference equations~\cite{lpsy08}. The problem of construction of a compatible quad-equation for a given differential-difference equation was solved in \cite{ggy18, gy15}. In~\cite{ggy18} the five-point differential-difference equations~\eqref{eqv} were considered.

Let us find an integrable quad-equation compatible with equation \eqref{eqlist_3_Mi} by using a different method. We use here a theory developed in \cite{s10,y90}. From \cite{mx13} we know that equation \eqref{eqlist_2_4}, expressed in the form
\begin{gather}\dot w_{n,m}=w_{n,m}(w_{n,m}+1)(w_{n+2,m}w_{n+1,m}-w_{n-1,m}w_{n-2,m}),\label{eqwnm}\end{gather} is the generalized symmetry of quad-equation
\begin{gather} w_{n+1,m+1}(w_{n,m+1}+w_{n,m}+1)+w_{n,m}(w_{n+1,m}+1)=0.\label{dis_w}\end{gather}

Let us rewrite it in the form
\begin{gather*} 2\frac{w_{n+1,m}+1}{w_{n+1,m+1}}+1=-2\frac{w_{n,m+1}+1}{w_{n,m}}-1.\end{gather*} This allows us to introduce a new function $z_{n,m}$, so that
\begin{gather}z_{n+1,m}=-2\frac{w_{n,m+1}+1}{w_{n,m}}-1, \qquad z_{n,m}=2\frac{w_{n,m}+1}{w_{n,m+1}}+1.\label{zw}\end{gather} It is obvious that
\begin{gather} w_{n,m}=-\frac{2(z_{n,m}+1)}{(z_{n+1,m}+1)(z_{n,m}-1)+4},\qquad w_{n,m+1}=\frac{2(z_{n+1,m}-1)}{(z_{n+1,m}+1)(z_{n,m}-1)+4}.\label{wz}\end{gather}
Rewriting these relations at the same point $w_{n,m+1}$, we get an equation for $z_{n,m}$
\begin{gather}(z_{n+1,m+1}-1)(z_{n+1,m}-1)(z_{n,m+1}-1)+(z_{n,m}+1)(z_{n+1,m}+1)(z_{n,m+1}+1)=0.\label{dis_z}\end{gather} So, this new discrete equation is obtained by transformation \eqref{zw} which is invertible on the solutions of quad-equation~\eqref{dis_w}.

Transformation (\ref{zw}), (\ref{wz}) allows one to rewrite the generalized symmetries of quad-equa\-tion~\eqref{dis_w}. Differentiating the second of relations~\eqref{zw} with respect to the time in virtue of~\eqref{eqwnm} and substituting the functions~\eqref{wz}, we get a generalized symmetry for~\eqref{dis_z}.
Let us denote the right hand side of equation~\eqref{eqlist_3_Mi} by
\begin{gather*}\Theta(z_{n+2},z_{n+1},z_{n},z_{n-1},z_{n-2}).\end{gather*} Then the resulting generalized symmetry is of the form
\begin{gather}\dot z_{n,m}=\Theta(z_{n+2,m},z_{n+1,m},z_{n,m},z_{n-1,m},z_{n-2,m}),\label{eqzn}\end{gather} i.e., it is defined by equation~\eqref{eqlist_3_Mi}. Due to the invariance of quad-equation \eqref{dis_z} under the change of discrete variables $n\leftrightarrow m$, a generalized symmetry in the other direction~$m$ reads
\begin{gather} z'_{n,m}=\Theta(z_{n,m+2},z_{n,m+1},z_{n,m},z_{n,m-1},z_{n,m-2}).\label{eqzm}\end{gather}

A generalized symmetry of \eqref{dis_w} in the $m$-direction is presented in \cite{mx13} and it is equivalent to symmetry \eqref{eqzm} up to the invertible transformation (\ref{zw}), (\ref{wz}). Formulae \eqref{wz} define two Miura type transformations which transform \eqref{eqzn} into \eqref{eqwnm}. Both of them are equivalent to the discrete Miura transformation~\eqref{zz27} up to auto-transformations given in (\ref{avtoz}), (\ref{avtoz2}).

\section{Conclusion}
In this paper we have enumerated all integrable modifications of the form~\eqref{eqvgen} of the INB equation~\eqref{INB} of the levels $k=1,2,3$, which are related to it by non-invertible transformations of the form~\eqref{tran_g}. Resulting Lists 1--3 contain 29 equations, see more precise formulations in Theorems~\ref{ord1}--\ref{ord3}. As far as we know, a classification problem of this kind is solved for the first time in the discrete case.

Corresponding discrete non-invertible transformations are presented in Lists $1'$--$3'$. We have analyzed those transformations to understand their nature. As a result we have shown that the most of them are linearizable.

We have also proved that the orders of possible transformations in this problem are restricted by the number five, see Theorem~\ref{order}. This estimate is accurate in the sense that there exist transformations for each order~$k$: $1\leq k\leq 5$.
Transformations of the orders $k=1,2,3$ have been completely enumerated in Sections~\ref{low} and~\ref{third}. Examples of the orders $k=4,5$ are presented in Section~\ref{secN}. The complete classification of transformations of the fourth and fifth orders is left for a future work.

As a result of the classification we obtain a number of new integrable equations and non-invertible discrete transformations. Among new transformations we would like to highlight transformation \eqref{zz27} of Miura type and two transformations~\eqref{trans_3_N} and \eqref{zy1} that are li\-nearizable in a non-standard way. Among equations we note two five-point differential-difference equations \eqref{eqlist_3_Mi}, \eqref{eqlist_3_N} and one quad-equation~\eqref{dis_z} compatible with~\eqref{eqlist_3_Mi}.

\subsection*{Acknowledgments}
RIY gratefully acknowledges the financial support from Russian Science Foundation grant (pro\-ject 15-11-20007).

\pdfbookmark[1]{References}{ref}
\LastPageEnding

\end{document}